\documentclass[11pt]{article}

\usepackage[a4paper,margin=1in]{geometry}

\usepackage[utf8]{inputenc}
\usepackage[T1]{fontenc}
\usepackage{lmodern}
\usepackage{amsmath,amssymb,amsfonts,amsthm}
\usepackage{bm}
\usepackage{mathrsfs}
\usepackage{physics}
\usepackage{braket}
\usepackage{microtype}
\usepackage{indentfirst}
\usepackage{graphicx}
\usepackage{subcaption}
\usepackage{booktabs}
\usepackage{array}
\usepackage[numbers,sort&compress]{natbib}
\usepackage{microtype}

\usepackage[
colorlinks=true,
citecolor=blue,
linkcolor=blue,
urlcolor=blue
]{hyperref}

\newtheorem{theorem}{Theorem}
\newtheorem{lemma}{Lemma}

\newtheorem{conjecture}{Conjecture}

\begin{document}

\title{Volume Law and Universality of Entanglement Entropy in Random Graph Fermi Systems}

\author{Saikat Sur\thanks{Email: \texttt{saikats@imsc.res.in}} \\ 
Optics \& Quantum Information Group \\ 
The Institute of Mathematical Sciences, HBNI \\ 
CIT Campus, Taramani, Chennai 600113, India}

\maketitle
\begin{abstract}
We study the ground-state entanglement 
entropy of free fermions on random 
graphs. We first establish a general 
criterion for the volume 
law on random graphs, based on the spectral characteristics of the Hamiltonian. We then apply this 
criterion to the Erd\H{o}s--R\'enyi 
random graph, where each of the 
possible edges is present independently 
with some probability. Using random 
matrix theory and asymptotic freeness, 
we show that the ground-state entanglement 
entropy obeys an exact volume law in Erd\H{o}s--R\'enyi 
graphs in
the thermodynamic limit, with a 
universal coefficient that is 
independent of the edge probability  of the 
graph. This coefficient is confirmed 
numerically to take the value 
approximately $0.3863$ nats, strictly 
below the Page value.The volume law 
therefore reflects the absence of 
geometric locality in the random graph.
\end{abstract}
\vspace{0.5cm}

\maketitle

 \section{Introduction}
 \label{sec:intro}

The von Neumann entropy of a subsystem, 
also known as the entanglement entropy, 
is the canonical measure of bipartite 
entanglement in quantum many-body 
physics~\cite{Amico2008, 
Eisert2010, nielsen2000book}. Given a pure many-body 
state and a bipartition of the system 
into two complementary subsystems, it quantifies the degree to 
which neither half can be assigned a 
definite quantum state independently 
of the other. A vanishing entanglement 
entropy indicates that the global state 
factorizes across the bipartition. A nonzero 
and growing entanglement entropy, by 
contrast, connotes that quantum 
correlations are distributed across 
the bipartition, and its scaling with 
subsystem size  encodes 
fundamental information about the 
structure of correlations in 
the ground state~\cite{calabrese_jsm_2004,  cramer_prl_2007, laflorencie_pr_2016}.

While the entanglement entropy of 
ground states has been extensively 
studied in translationally invariant 
systems, disordered 
lattices~\cite{Gioev2006, elgart_jsp_2017, 
nandkishore_2015, abanin_2019}, and 
continuum Fermi gases~\cite{Leschke2014}, comparatively 
little is known about free fermions on 
large random graphs, where translation 
invariance and geometric locality are 
both absent. For quantum systems defined 
on graph families with large automorphism 
groups, nontrivial upper bounds on 
entanglement entropy arise from symmetry 
and representation-theoretic 
constraints~\cite{prakash2024, 
sur2026automorphism}, but such 
constraints are absent for asymmetric 
graphs. For free fermions, the 
entanglement entropy is particularly 
tractable: it is completely determined 
by the eigenvalues of the reduced 
correlation matrix~\cite{Peschel2003, peschel_jpa_2009}, 
which inherits its spectral and 
eigenvector statistics directly from 
the random Hamiltonian. The 
entanglement entropy is therefore a 
direct probe of the random matrix 
structure of the Hamiltonian. This 
raises the central question of this 
work: does the ground-state 
entanglement entropy of free fermions 
on a large asymmetric random graph 
obey a volume law, and if so, how does 
the universal coefficient depend on 
the connectivity of the graph?

{\color{black} Another open problem in the theory 
of quantum entanglement is to identify 
general spectral conditions on the 
Hamiltonian that are both necessary 
and sufficient for the ground-state 
entanglement entropy to obey a volume 
law. For translationally invariant 
free-fermionic systems, the connection 
between the Fermi surface geometry and 
the entanglement scaling is well 
established~\cite{Gioev2006, Hastings2007,eisert_rmp_2010}. 
For disordered and interacting systems, 
partial results are available~\cite{nandkishore_2015, 
abanin_2019}, while thermalizing phases 
are expected to exhibit volume-law 
entanglement consistent with the 
eigenstate thermalization 
hypothesis~\cite{deutsch_eth_2018, srednicki_pre_1994, rigol2008thermalization}. However, 
these results are specific to 
particular geometries, that do not extend 
to general random graph Hamiltonians.  
To the best of our knowledge, no 
general spectral criterion for the 
volume law of free fermions on general 
graphs has been established in the 
literature, although volume laws have 
been proved for specific cases 
including free fermions with a full 
random matrix Hamiltonian~\cite{pastur_2024} 
and structured graphs~\cite{gori_prb_2015}
What is  needed is a criterion in terms of the spectral and 
eigenvector statistics of the 
Hamiltonian, without reference to 
spatial geometry. In the present work, we formulated a criterion 
 in terms of the 
Green function of the system, that applies to 
any free-fermionic Hamiltonian on any 
random graph ensemble. It identifies 
the boundary response between the two halves of the system as the 
fundamental determinant of the entanglement scaling.}

After developing a general formalism, we investigate the ground-state entanglement entropy of free fermions on Erd\H{o}s--R\'enyi random graphs. Erd\H{o}s--R\'enyi graphs  possess a trivial symmetry in the thermodynamic limit~\cite{erdos_1959, erdos_1963, luczak_1988}, and consequently provide a natural setting  to investigate the behaviour entanglement entropy in bulk scale. Such asymmetric interaction geometries arise naturally in complex large scale quantum networks~\cite{abrahams_ws_2010,evers_rmp_2008, perseguers2010,acin2007,perseguers2010,wehner_2018,saikat_npj, biamonte_cp_2019,hens_np_2019, saikat_chaos_2026}.  Understanding the entanglement behaviour in this setting therefore offers a basic insight into the generic entangling capacity of complex quantum connectivity beyond the constraints imposed by spatial symmetries.  In the dense regime,  the edge probability $p$  remains finite as the system size $N$ tends to infinity, so that the average degree  grows linearly with system size. For the sparse regime ($p \sim 1/N$), existence of a extensive entropy density is not guaranteed~\cite{khorunzhy_jmp_1996,Semerjian2002, Bordenave2010} and  
for the threshold regime $(p \sim (\log~N)/N$),  the graph is 
connected with finite probability but the thermodynamic limit is delicate. The dense 
regime is therefore the natural regime for studying thermodynamic entanglement scaling, as the graph remains connected with high probability while the average degree diverges in the large-$N$ limit.

This paper is 
organised as follows. 
Section~\ref{sec:model} introduces 
the model of free fermions on random 
graphs and establishes the contour 
integral representation of the 
entanglement correlation matrix via 
the Schur complement identity. 
Section~\ref{sec:general} derives the 
general Green function criterion for 
the volume law, and the classification of 
random graph ensembles into Class I 
and Class II. 
Section~\ref{sec:dense} applies the 
general framework to the 
Erd\H{o}s--R\'enyi random graph, 
establishes the volume law  
and derives the universal entropy 
density. 
Section~\ref{sec:numerics} presents 
numerical results confirming the 
volume law, the universality of 
the coefficient. 
Section~\ref{sec:conclusions} 
summarises the results and discusses 
open directions.

\subsection*{Main results and assumptions}

We consider free fermion ground state 
on a random graph of $N$ vertices with a 
balanced bipartition. The  entanglement entropy 
is computed from the restricted correlation 
matrix via the Peschel formula. Our 
main results are the following.

\textbf{General criterion 
(Theorem~\ref{thm:criterion}).} For any 
free-fermionic Hamiltonian on a random 
graph with bounded spectral radius, the ground-state 
entanglement entropy obeys an exact 
volume law if  the boundary 
self-energy 
self-averages to a deterministic value
in the thermodynamic limit. All our results are based on the assumption of thermodynamic limit, that is, the system size tending to infinity.

\textbf{Volume law for the 
Erd\H{o}s--R\'enyi graph.} We prove analytically that the 
Erd\H{o}s--R\'enyi graph  at 
fixed edge probability satisfies the criterion 
of Theorem~\ref{thm:criterion} via the 
asymptotic freeness of its block 
Hamiltonians. The entanglement entropy 
therefore obeys $S_A/m \to s_\infty$ 
in the thermodynamic limit, with the $p$ independent universal coefficient $s_\infty$.
The coefficient assumes the value 
$s_\infty \approx 0.3863$ numerically. 
Based on free probability theory, we conjecture that the empirical spectral 
measure of the correlation matrix converges to the scaled
arcsine distribution, yielding 
$s_\infty = 2\ln 2 - 1$.

\section{Model and Setup}
\label{sec:model}

We consider a system of $N$ spinless noninteracting fermionic modes, one
attached to each vertex of a graph, which permits
each mode to be either empty or  occupied. The full many-body Hilbert space has dimension $2^N$, and grows exponentially with system size.
The Hamiltonian in the fermion operators is given by the form
\begin{equation}
  {h}_{N} = \sum_{i,j} (t_{N})_{ij}c_i^\dagger c_j.
  \label{eq:second_quantized}
\end{equation}
where the entry $(t_{N})_{ij}$ quantifies the hopping amplitude between
modes $i$ and $j$. The randomness enters through 
the hopping matrix $t_N$. Different random graph ensembles 
correspond to different probability distributions on 
the edge set of the Hamiltonian. The specific choice of graph 
ensemble determines the spectral statistics of $t_N$ 
and hence the entanglement properties of the ground 
state of the system. 
 
Let $\{|\phi_k\rangle\}_{k=1}^N$ denote the orthonormal
eigenstates with energy $\varepsilon_k$ of $H_{N,p}$. The eigenstates  are written in the single-particle basis
$\{|j\rangle\}_{j=1}^N$ of vertex states as $|\phi_k\rangle = \sum_{j=1}^N
\phi_k(j)|j\rangle$.
The fermionic creation operator for the eigenmode
$|\phi_k\rangle$ is the corresponding linear combination of
site creation operators,
\begin{equation}
  d_k^\dagger = \sum_{j=1}^N \phi_k(j) c_j^\dagger,
  \label{eq:dk}
\end{equation}
which creates a fermion in the energy eigenstate $|\phi_k\rangle$.
The fermion vacuum $|\Omega\rangle = \otimes^N_{k=1} |0\rangle_k$ is the state with all $N$ modes
unoccupied, satisfying $c_j|\Omega\rangle = 0$ for every
$j=1,\ldots,N$.
At half-filling with Fermi energy $\mu_F = 0$, the many-body
ground state is the Slater determinant
\begin{equation}
  |\Psi_0\rangle
  = \prod_{k:\,\varepsilon_k < 0} d_k^\dagger\,|\Omega\rangle,
\label{eq:groundstate}
\end{equation}
where $\varepsilon_k$ is the eigenvalue of $H_{N}$
associated with the eigenvector $|\phi_k\rangle$.
Since the spectrum of $H_{N,p}$ is symmetric about zero in
the large-$N$ limit, exactly $m = N/2$ modes satisfy
negative eigenvalues and are occupied with high probability.
 
For a balanced bipartition of the vertices $V = A\sqcup B$
with $|A| = |B| = N/2 \equiv m$, the entanglement between the subsystems
$A$ and $B$ in the ground state~\eqref{eq:groundstate} is captured by the projector onto the occupied
modes $\{|\phi_k\rangle : \varepsilon_k<0\}$, that built the ground state
$|\Psi_0\rangle$ out of the vacuum $|\Omega\rangle$.
Since $|\Psi_0\rangle$ is a Slater determinant, Wick's
theorem guarantees that the reduced density matrix
$\rho_A = \mathrm{Tr}_B|\Psi_0\rangle\langle\Psi_0|$
is entirely determined by the one-body correlation
function~\cite{Peschel2003}
\begin{equation}
  (C_A)_{ij}
  = \langle\Psi_0|c^\dagger_jc_i|\Psi_0\rangle,
  \quad i,j \in A,
  \label{eq:CA_def}
\end{equation}
The matrix element $\langle\Psi_0|c_j^\dagger c_i|\Psi_0\rangle$
is the projection of $P_F = \sum_{\varepsilon_k<0}|\phi_k\rangle\langle\phi_k|$ onto sites $i,j\in A$, written in
the site basis.
Since $C_A$ is a compression of the projector $P_F$, its
eigenvalues $\{\lambda_i\}_{i=1}^m$ lie in $[0,1]$.
Physically, $C_A$ encodes how the $m$ occupied global modes
that make up $|\Psi_0\rangle$ are redistributed when restricted to subsystem $A$.  Each eigenvalue $\lambda_i$
is the probability that the corresponding mode is occupied
in the reduced state $\rho_A$, with $1-\lambda_i$ the
probability that it is empty. Because $|\Psi_0\rangle$ is Gaussian, $\rho_A$ factorises into independent two-level systems, one per
eigenmode of $C_A$, $\rho_A = \bigotimes_{i=1}^m
\mathrm{diag}(1-\lambda_i,\lambda_i)$, so the
entanglement entropy is  the sum of the binary
 Shannon entropies of these $m$ effective
modes~\cite{Peschel2003, nielsen2000book},
\begin{equation}
  S = -\sum_{i=1}^{m}
      \bigl[\lambda_i\log\lambda_i
            + (1-\lambda_i)\log(1-\lambda_i)\bigr].
  \label{eq:entropy}
\end{equation}

The Fermi projector $P_F$ can itself be written
as a contour integral of the Green function of $H_{N,p}$.
For any Hermitian matrix $H$ with eigenpairs
$\{\varepsilon_k,|\phi_k\rangle\}$, the Green function
$G(z) = (zI-H)^{-1} = \sum_k |\phi_k\rangle\langle\phi_k|/(z-\varepsilon_k)$
has a simple pole at each eigenvalue, with residue equal to
the projector onto the corresponding eigenstate.
By the Cauchy residue theorem, integrating $G(z)$ around any
closed contour $\gamma$ therefore reproduces the sum of
exactly those eigen-projectors whose eigenvalues lie inside
$\gamma$. This contour integral
of the full $N\times N$ Green function is the Fermi
projector $P_F$. Since $C_A = P_F|_A$ is the $(A,A)$ block of $P_F$, it
suffices to know only the effective Hamiltonian of the block $(A,A)$ at each point on $\gamma$, rather than
the full $N\times N$ matrix. To compute the correlation matrix $C_A$ in \eqref{eq:CA_def} without diagonalising the full matrix
$H_{N,p}$, we write $H_{N,p}$ in the block form induced by the
bipartition,
\begin{equation}
  H_{N,p} =
  \begin{pmatrix}
    H_{AA} & H_{AB} \\
    H_{AB}^\top & H_{BB}
  \end{pmatrix},
  \label{eq:block}
\end{equation}
where $H_{AA}$ and $H_{BB}$ are $m\times m$ real symmetric matrices
of intra-subsystem hoppings and $H_{AB}$ is an $m\times m$ real
matrix of inter-subsystem hoppings.
The block structure of $H_N$ induced by the bipartition 
encodes the intra-subsystem hoppings $H_{AA}$, $H_{BB}$ 
and the inter-subsystem hoppings $H_{AB}$. The statistical 
properties of these blocks depend on the specific random 
graph ensemble under consideration.
The Schur complement identity gives an exact contour integral
representation of entanglement correlation matrix~\cite{zhang2005} (see Lemma~\ref{lem:schur} in Appendix A)
\begin{equation}
C^{(N)}_A = \frac{1}{2\pi i} \oint_{\gamma}
        G^{N}_{AA}(z) dz,
  \label{eq:CA_contour}
\end{equation}
where
\begin{equation}
  G^{(N)}_{AA}(z) = 
        \bigl(zI_m - H_{AA} - \Sigma^{(N)}(z)\bigr)^{-1},  
        \label{eq:GAA}
\end{equation}
is  the Green function of the joint operators $H_{AA} + \Sigma(z)$ at spectral parameter $z \in \mathbb{C}$, and

\begin{equation}
  \Sigma^{(N)}(z) = H_{AB}\bigl(zI_m - H_{BB}\bigr)^{-1}H_{AB}^\top
  \label{eq:Sigma}
\end{equation}
is the self-energy  from integrating out subsystem $B$. The contour $\gamma$ is a positively oriented contour in 
$\mathbb{C}\setminus\mathbb{R}$ enclosing exactly the 
negative eigenvalues of $H_N$; the existence of such a 
deterministic contour is established in 
Lemma~\ref{lem:boundary_response}. The operator $H_{\mathrm{eff}}^{(N)}(z) 
  = H_{AA}^{(N)} + \Sigma^{(N)}(z)$ is the effective Hamiltonian of subsystem $A$
after the $B$ degrees of freedom have been integrated out, with the self-energy $\Sigma^{(N)}(z)$ encoding the influence of $B$ on
$A$ mediated through the inter-subsystem hopping $H_{AB}$. 
Eq.~\eqref{eq:CA_contour} is exact for any finite $N$,
and the thermodynamic limit behaviour of the self energy is the key object
of the analysis that follows.

\section{When Does a Volume Law Emerge?}
\label{sec:general}

For the contour integral~\eqref{eq:CA_contour} to produce 
a deterministic limit in the thermodynamic 
limit, it suffices that the self-energy assumes a value 
deterministically in the thermodynamic limit. If this holds, the effective Hamiltonian  becomes deterministic, and the restricted Green function  concentrates around a deterministic 
limit uniformly on the contour $\gamma$. 
The following lemma and the theorem make this argument precise.

\setcounter{lemma}{1}

\begin{lemma}
\label{lem:boundary_response}

Let $H_N$ be a Hermitian matrix with 
balanced bipartition $V = A\sqcup B$, 
$|A|=|B|=m$, and let $\eta > 0$ be the 
spectral gap parameter,  measuring the 
minimum distance from the integration 
contour to the real axis. Suppose that 
the self-energy $\Sigma^{(N)}(z)$ 
converges to a deterministic limit 
$\bar\Sigma(z)$ uniformly for 
$|\mathrm{Im}(z)|\geq\eta$.
\begin{equation}
  \lim_{N \rightarrow \infty}\sup_{|\mathrm{Im}(z)|\geq\eta}
  \|\Sigma^{(N)}(z) - \bar\Sigma(z)\|
  = 0.
\label{eq:sigma_conv}
\end{equation}
Then $G_{AA}^{(N)}(z)$ converges to 
$\bar{G}_{AA}^{(N)}(z) = (zI_m - 
H_{AA}^{(N)} - \bar\Sigma(z))^{-1}$ 
in the thermodynamic limit. Note that $\bar{G}_{AA}^{(N)}(z)$ 
still depends on the random matrix 
$H_{AA}^{(N)}$. (See 
Appendix~A for the proof.)
\end{lemma}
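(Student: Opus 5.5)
The plan is to prove the convergence by a second resolvent identity, with the imaginary part of $z$ supplying uniform a priori bounds on both resolvents. Write
\[
M_N(z) = zI_m - H_{AA}^{(N)} - \Sigma^{(N)}(z), \qquad \bar M_N(z) = zI_m - H_{AA}^{(N)} - \bar\Sigma(z),
\]
so that $G_{AA}^{(N)} = M_N^{-1}$ and $\bar G_{AA}^{(N)} = \bar M_N^{-1}$. The resolvent identity gives
\[
G_{AA}^{(N)}(z) - \bar G_{AA}^{(N)}(z) = G_{AA}^{(N)}(z)\bigl(\Sigma^{(N)}(z) - \bar\Sigma(z)\bigr)\bar G_{AA}^{(N)}(z).
\]
It therefore suffices to bound $\|G_{AA}^{(N)}(z)\|$ and $\|\bar G_{AA}^{(N)}(z)\|$ uniformly in $N$ and in $|\mathrm{Im}\, z|\ge\eta$, and then apply \eqref{eq:sigma_conv}.

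\textbf{Bound on $G_{AA}^{(N)}$.} I would use that $G_{AA}^{(N)}(z)$ is exactly the $(A,A)$ block of the full resolvent $(zI-H_N)^{-1}$; this is the Schur complement identity of Lemma~\ref{lem:schur}. Hence
\[
\|G_{AA}^{(N)}(z)\| \le \|(zI-H_N)^{-1}\| \le 1/|\mathrm{Im}\, z| \le 1/\eta .
\]
Equivalently, $\Sigma^{(N)}(z)$ is a Herglotz matrix function: $\mathrm{Im}\,\Sigma^{(N)}(z) = -\mathrm{Im}(z)\, H_{AB}|(zI-H_{BB})^{-1}|^2H_{AB}^\top$ has sign opposite to $\mathrm{Im}\, z$. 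It follows that $\mathrm{Im}\, M_N(z)$ has the sign of $\mathrm{Im}\, z$ with modulus at least $|\mathrm{Im}\, z|$, which gives the same bound directly.

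\textbf{Bound on $\bar G_{AA}^{(N)}$ (main obstacle).} The limit $\bar\Sigma(z)$ is not a priori of the form of a Schur complement, so I would transfer the Herglotz property to it. For each fixed $z$ with $\mathrm{Im}\, z \ge \eta$ we have $\mathrm{Im}\,\Sigma^{(N)}(z) \le 0$ for every $N$, and operator-norm convergence preserves this, so $\mathrm{Im}\,\bar\Sigma(z)\le 0$; symmetrically for $\mathrm{Im}\, z \le -\eta$. Then, for any unit vector $v$,
\[
|\langle v, \bar M_N v\rangle| \ge |\mathrm{Im}\langle v,\bar M_N v\rangle| \ge |\mathrm{Im}\, z| \ge \eta .
\]
This numerical-range bound gives $\|\bar G_{AA}^{(N)}(z)\|\le 1/\eta$; in particular $\bar M_N(z)$ is invertible, so $\bar G_{AA}^{(N)}$ is well defined. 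If one prefers to avoid this limiting argument, a fallback is available: once $\|\Sigma^{(N)}-\bar\Sigma\|<\eta/2$, a Neumann series perturbation of $M_N$ gives $\|\bar G_{AA}^{(N)}\|\le 2/\eta$ for all large $N$.

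\textbf{Conclusion.} Combining the two bounds with the resolvent identity,
\[
\sup_{|\mathrm{Im}\, z|\ge\eta}\|G_{AA}^{(N)}(z)-\bar G_{AA}^{(N)}(z)\| \le \eta^{-2}\sup_{|\mathrm{Im}\, z|\ge\eta}\|\Sigma^{(N)}(z)-\bar\Sigma(z)\| \to 0 .
\]
This is uniform convergence on the region, and hence on any contour $\gamma$ contained in it. The estimate is purely deterministic, applied realization by realization; if \eqref{eq:sigma_conv} holds only almost surely or in probability, the conclusion inherits the same mode of convergence. One further remark: $\gamma$ must cross the real axis at the Fermi level in order to enclose the negative spectrum, so it cannot lie entirely in $|\mathrm{Im}\, z|\ge\eta$. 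The argument above covers only the portions of the contour satisfying $|\mathrm{Im}\, z|\ge\eta$; the crossing segment would need a separate treatment using a spectral gap at $0$.
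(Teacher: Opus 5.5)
Your proposal is correct and follows essentially the same route as the paper: uniform bounds $\|G_{AA}^{(N)}\|,\|\bar G_{AA}^{(N)}\|\le 1/\eta$ come from the sign of the imaginary part of the self-energy (with $\mathrm{Im}\,\bar\Sigma\le 0$ inherited in the limit) via a numerical-range estimate, and the second resolvent identity then gives the $\eta^{-2}\sup\|\Sigma^{(N)}-\bar\Sigma\|$ bound. Your Neumann-series fallback is the more robust way to control $\bar G_{AA}^{(N)}$, because $\bar\Sigma$ necessarily has $N$-dependent size, so the limiting-sign argument used in both your proposal and the paper is somewhat informal; your remark that the contour's vertical segments leave the region $|\mathrm{Im}\,z|\ge\eta$ is a legitimate caveat about how the lemma is applied in Theorem~\ref{thm:criterion}, not a flaw in your proof of the lemma.
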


\begin{theorem}
\label{thm:criterion}
Let $H_N$ be a free-fermionic Hamiltonian 
defined on a random graph with $N$ 
vertices, balanced bipartition 
$V = A\sqcup B$, $|A|=|B|=m=N/2$, 
and Fermi level $\mu_F = 0$. Suppose 
that

(i) the spectral radius is 
bounded, $\|H_N\| \leq M$ 
for some deterministic $M$ independent of $N$;

(ii) the self-energy converges 
to a deterministic limit $\bar\Sigma(z)$ 
in the thermodynamic limit.

Then the entanglement entropy satisfies 
the exact volume law
\begin{equation}
\lim_{N\to\infty}  \frac{S_A}{m} = 
  s_\infty, 
  \qquad 
  \frac{1}{m}\operatorname{Tr}(C_A-C_A^2) 
  \leq s_\infty \leq \log 2,
  \label{eq:volume_law}
\end{equation}
provided $s_\infty > 0$. (See 
Appendix~A for the proof.)
\end{theorem}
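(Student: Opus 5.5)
The plan is to pass from convergence of the self-energy to convergence of the spectral measure of $C_A$, and then read off the entropy density by integrating a continuous function against the limit measure.

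\textbf{Step 1: the contour and the resolvent bounds.} Assumption (i) gives $\|H_N\|\le M$, so every spectrum lies in $[-M,M]$. The natural choice is a fixed rectangle $\gamma$ with vertical sides at $\mathrm{Re}\,z=-M-1$ and $\mathrm{Re}\,z=0$, and horizontal sides at $\mathrm{Im}\,z=\pm\eta$. On this contour I will use two resolvent bounds:
\[
\|G^{(N)}_{AA}(z)\|\le |\mathrm{Im}\,z|^{-1},\qquad \|\bar G^{(N)}_{AA}(z)\|\le |\mathrm{Im}\,z|^{-1}.
\]
The first holds because $G^{(N)}_{AA}$ is a compression of the full resolvent. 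The second holds provided $\mathrm{Im}\,\bar\Sigma(z)\le 0$ for $\mathrm{Im}\,z>0$, which is inherited in the limit from the Herglotz property of $\Sigma^{(N)}$.

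\textbf{Step 2: the segment crossing the real axis.} The vertical side through $\mathrm{Re}\,z=0$ crosses the real axis at the Fermi level, so the bound of Step 1 fails there. I would treat this side separately. One option is to use the contour form only for the regularized quantity $f_\eta(H)$, where $f_\eta$ is the Fermi function smoothed on scale $\eta$ (equivalently, $\frac{1}{2\pi i}\oint G(z)\,dz$ with the crossing cut off at height $\eta$). I then need to control the error from $P_F$:
\[
\frac{1}{m}\operatorname{Tr}\bigl|P_F-f_\eta(H_N)\bigr|\;\lesssim\;\frac{\#\{k:\ |\varepsilon_k|\le \eta\}}{N}.
\]
This error vanishes as $\eta\to0$ provided the limiting density of states of $H_N$ has no atom at $0$. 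That property will be treated as part of the hypotheses, and it is implicit in the paper's statement that exactly $m$ modes are occupied.

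\textbf{Step 3: moments of $C_A$.} By Lemma~\ref{lem:boundary_response}, $\sup_{z\in\gamma_\eta}\|G_{AA}^{(N)}-\bar G_{AA}^{(N)}\|\to0$ on the part $\gamma_\eta$ of the contour away from the real axis. Integrating over $\gamma_\eta$, whose length is bounded, gives
\[
\|C_A^{(N),\eta}-\bar C_A^{(N),\eta}\|\to0,\qquad \bar C_A^{(N),\eta}:=\frac{1}{2\pi i}\oint_{\gamma_\eta}\bar G^{(N)}_{AA}(z)\,dz .
\]
Operator-norm closeness implies that the normalized moments agree in the limit: $\frac1m\operatorname{Tr}(C_A^k)-\frac1m\operatorname{Tr}(\bar C_A^k)\to0$ for each fixed $k$. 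The remaining randomness sits in $H_{AA}$. This is where (ii) must be read as deterministic convergence in a sense strong enough that $\frac1m\operatorname{Tr}\,\bar G_{AA}(z_1)\cdots\bar G_{AA}(z_k)$ self-averages, for example via asymptotic freeness or concentration. I would add this to the interpretation of (ii), or derive it from concentration of Lipschitz spectral functionals of $H_{AA}$.

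\textbf{Step 4: the entropy density.} Once the moments converge, the empirical eigenvalue measure $\mu_N$ of $C_A$ on $[0,1]$ converges weakly to a deterministic $\mu$, because measures on a compact interval are determined by their moments. The binary entropy $h(\lambda)=-\lambda\log\lambda-(1-\lambda)\log(1-\lambda)$ is continuous on $[0,1]$, so
\[
\frac{S_A}{m}=\int h\,d\mu_N\;\longrightarrow\;\int h\,d\mu=:s_\infty .
\]
The bounds in~\eqref{eq:volume_law} then follow pointwise. For the upper bound, $h(\lambda)\le\log2$. For the lower bound, $h(\lambda)\ge 4\lambda(1-\lambda)\log 2\ge \lambda(1-\lambda)$, and integrating gives $\frac1m\operatorname{Tr}(C_A-C_A^2)\le S_A/m$. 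Passing to the limit yields the stated inequality. The proviso $s_\infty>0$ is exactly what makes this a genuine volume law rather than a sub-extensive law.

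\textbf{Main obstacle.} I expect the hardest steps to be Steps 2 and 3. Step 2 requires an estimate on the density of states near $0$, because Lemma~\ref{lem:boundary_response} gives convergence only for $|\mathrm{Im}\,z|\ge\eta$. Step 3 requires self-averaging of the residual dependence on $H_{AA}$. My first attempt at both would be a Lipschitz concentration bound for $\frac1m\operatorname{Tr} f(H)$.
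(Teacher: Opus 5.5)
Your skeleton matches the paper's. You apply Lemma~\ref{lem:boundary_response}, integrate the resolvent difference along $\gamma$, pass from operator-norm closeness to closeness of spectra (the paper uses Weyl's inequality, you use moments; either works), and then integrate the continuous function $h$. You depart from the paper exactly where its proof is incomplete.

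In step (b) of Appendix~A the paper bounds $\|C_A^{(N)}-\bar C_A^{(N)}\|$ by $\frac{\ell(\gamma)}{2\pi}\sup_{z\in\gamma}\|G_{AA}^{(N)}(z)-\bar G_{AA}^{(N)}(z)\|$ over the whole rectangle. Lemma~\ref{lem:boundary_response} only covers $|\mathrm{Im}\,z|\ge\eta$. In the Wigner case the side at $x_R=0^-$ passes within $o(1)$ of eigenvalues of $H_N$, and for $\bar\Sigma(z)=g(z)W$ it crosses the branch cut of $g$. In step (c) the paper infers weak convergence to a deterministic $\bar\rho_{C_A}$ from Weyl's inequality, although $\bar C_A^{(N)}$ still depends on the random $H_{AA}^{(N)}$, as Lemma~\ref{lem:boundary_response} itself notes.

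Your smoothing step and your strengthened reading of (ii) are exactly what these two steps need, and some strengthening is unavoidable. For instance, take $H_{BB}=0$, $H_{AB}=I_m$ and $H_{AA}=\xi I_m$, with $\xi$ uniform on $[-1,1]$ and drawn afresh for each $N$. Then $\Sigma^{(N)}(z)=z^{-1}I_m$, $\|H_N\|\le 2$, and the spectrum has a gap around $0$. Yet $C_A=\lambda I_m$ with $\lambda=\varepsilon_-^2/(1+\varepsilon_-^2)$ and $\varepsilon_-=(\xi-\sqrt{\xi^2+4})/2$. So $S_A/m=h(\lambda)\ge 0.58$ is a nondegenerate random variable for every $N$. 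Hence (i)--(ii) alone do not produce a deterministic $s_\infty$, and the version you prove, with the extra hypotheses stated, is the right one.

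Your route also supplies the lower bound, which the paper never proves. Its step (e) gives only $s_\infty\le\log 2$ via Jensen and then discusses the idempotency defect qualitatively. Your pointwise inequality yields $\frac1m\operatorname{Tr}(C_A-C_A^2)\le S_A/m$ for every $N$; even $-\log x\ge 1-x$, which gives $h(\lambda)\ge 2\lambda(1-\lambda)$, suffices.

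Three details need tightening.

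First, cutting the contour produces Poisson (Lorentzian) smoothing. The error $|f_\eta(\varepsilon)-\mathbf 1[\varepsilon<0]|$ is of order $\frac1\pi\arctan(\eta/|\varepsilon|)$, which decays only like $\eta/|\varepsilon|$, so your bound by $\#\{k:|\varepsilon_k|\le\eta\}/N$ does not hold as written. Replace it by
$\frac1m\operatorname{Tr}|P_F-f_\eta(H_N)|\lesssim\frac1N\#\{k:|\varepsilon_k|\le\delta\}+\eta/\delta+\eta$
and take $\delta=\sqrt\eta$. If the empirical density of states converges weakly to a limit $\rho$ with $\rho(\{0\})=0$, the portmanteau theorem gives $\limsup_N\le\rho([-\sqrt\eta,\sqrt\eta])+O(\sqrt\eta)\to0$.

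Second, the left side at $\mathrm{Re}\,z=-M-1$ also crosses the real axis, where Lemma~\ref{lem:boundary_response} says nothing about $\bar G_{AA}$. Keep only the two horizontal segments. Their integral is exactly $P_Af_\eta(H_N)P_A$, with $f_\eta$ the Poisson-smoothed indicator of $[-M-1,0]$; the left end costs another $O(\eta)$, already included above.

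Third, in Step 3 concentration around the mean is not enough. You need $\frac1m\operatorname{Tr}\,\bar G_{AA}(z_1)\cdots\bar G_{AA}(z_k)$ to converge to deterministic limits uniformly on the horizontal segments; for Erd\H{o}s--R\'enyi, this is what asymptotic freeness of $H_{AA}$ and $W$ would supply. You then need an approximation argument letting $\eta\to0$ after $N\to\infty$, using uniform continuity of $h$ and the trace-norm bound $\sum_i|\lambda_i(X)-\lambda_i(Y)|\le\|X-Y\|_1$.
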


The convergence of the self energy
to a deterministic limit and boundedness of the spectral radius of the Hamiltonian
are the minimal 
conditions for a volume law to hold. This 
requires the entire product $
  \Sigma^{(N)}(z) 
  = H_{AB}^{(N)}
    (zI_m - H_{BB}^{(N)})^{-1}
    (H_{AB}^{(N)})^\top$
to self-average. It does not necessarily require 
its individual factors to self-average separately, 
since the product can converge to a 
deterministic limit even when the 
individual factors do not. A necessary 
condition for $\Sigma^{(N)}(z)$ to 
be bounded is $
  \|\Sigma^{(N)}(z)\| 
  \leq \|H_{AB}^{(N)}\|^2/\eta
  = \mathcal{O}(1)$.
This condition requires $\|H_{AB}^{(N)}\| = 
mathcal{O}(1)$. Therefore random graph ensembles can be classifies  into two classes. \textbf{Class I:} 
The self-energy self-averages to a 
deterministic limit and the spectrum of the Hamiltonian is bounded. The correlation 
matrix $C_A^{(N)}$ has a well defined 
thermodynamic limit and a volume law 
can emerge. The Erd\H{o}s--R\'enyi 
graph in the dense regime 
belongs to this class and will be discussed in detail in the next section. \textbf{Class II:} Either the self-energy diverges or 
fluctuates between graph realisations or the spectrum of the Hamiltonian is unbounded. 
The correlation matrix $C_A^{(N)}$ 
does not have a well defined 
thermodynamic limit and no volume 
law exists. The entanglement entropy  in this case
remains bounded above by the Page value, but its 
scaling in the thermodynamic limit 
is an open problem beyond the scope 
of this paper.

\begin{figure}[t]
\centering
\includegraphics[width=0.70\linewidth]{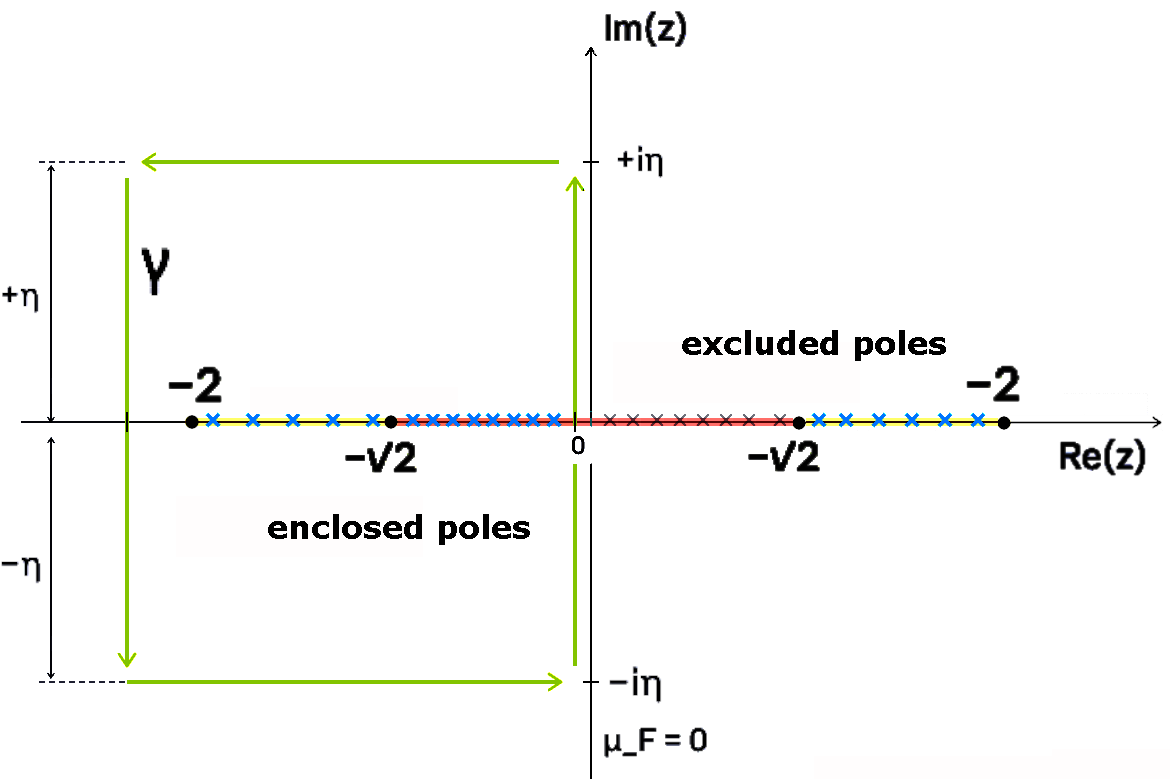}
\caption{A schematic representation of the integration contour $\gamma$ in the complex plane for 
computing the entanglement correlation matrix $C_A$ via 
Eq.~\eqref{eq:CA_contour}. The contour is the boundary of the rectangle 
$\{z = x + iy : x_L \leq x \leq x_R, -\eta \leq y \leq \eta\}$, 
traversed counter-clockwise. The boundary of 
$\gamma$ lies strictly off the real axis at imaginary offsets $\pm i\eta$. The crosses  on the real axis 
denote  the poles 
of $G_{AA}(z)$; those with $\varepsilon_k < 0$ lie strictly 
inside $\gamma$ and are picked up by the residue theorem, while those 
with $\varepsilon_k > 0$  lie outside $\gamma$ and do not 
contribute. The right boundary of the contour is chosen 
immediately to the left of the Fermi energy $\mu_F = 0$.}
\label{fig:contour}
\end{figure}

\section{Free Fermions on  Erd\H{o}s-R\'enyi 
Random Graphs}
\label{sec:dense}

Having established the general conditions 
under which a free-fermionic ground state 
on a random graph exhibits a volume law 
of entanglement entropy, we now turn to 
a concrete example that 
satisfies all these conditions exactly. 
The Erd\H{o}s--R\'enyi random graph is the 
simplest and most tractable random graph 
ensemble in which the self-energy 
$\Sigma^{(N)}(z)$ self-averages to a  deterministic limit in the thermodynamic 
limit. This follows from the statistical 
independence of the edge sets corresponding 
to $H_{AB}^{(N)}$ and $H_{BB}^{(N)}$, 
which is a direct consequence of the 
independent edge structure of the 
Erd\H{o}s--R\'enyi model. 

The Erd\H{o}s--R\'enyi random graph has $N$ vertices, in which each of the $C(N,2)$
edges is present independently with probability
$p\in(0,1)$~\cite{erdos_1959, erdos_1963}. The one-body Hamiltonian is the centred and normalized
adjacency matrix
\begin{equation}
  H_{N,p} = \frac{t_{N,p} - p(\mathbf{1}\mathbf{1}^\top - I_N)}{\sqrt{Np(1-p)}},
  \label{eq:H}
\end{equation}
where $t_{N,p}$ is the $N\times N$ adjacency matrix of graph,
$I_N$ is the identity, and $\mathbf{1} = (1,\ldots,1)^\top
\in\mathbb{R}^N$, so that $\mathbf{1}\mathbf{1}^\top - I_N$
is the all-ones matrix with zero diagonal.
After this normalisation, the off-diagonal entries have
mean zero and variance $1/N$.
For fixed $p\in(0,1)$, the matrix $H_{N,p}$ falls within
the Wigner universality class. Its empirical spectral
measure converges to the semicircle law in the
thermodynamic limit, independently of the entry
distribution, as for any Wigner-type ensemble with i.i.d.\
entries of mean zero and finite variance~\cite{Anderson2010,ErdosKnowlesYauYin2013}. The transformation from the original Hamiltonian 
$t_{N,p}$ to the normalized form $H_{N,p}$ is a rescaling that places the Fermi level 
at zero and brings the spectrum into the Wigner 
universality class. As shown in 
Appendix~\ref{app:p_independence}, this 
transformation leaves the entanglement correlation 
matrix $C_A$ unchanged in the thermodynamic limit. The bulk eigenvectors of $t_N$ and 
$H_{N,p}$ are identical, and the same set of 
eigenvectors is occupied at half-filling under both 
Hamiltonians. Consequently, the entanglement entropy 
density $s_\infty$ computed from $h_{N,p}$ is 
identical to that computed from $H_{N,p}$ in the 
thermodynamic limit.

We analyse Eq.~\eqref{eq:CA_contour} in the thermodynamic
limit where the edge probability $p$ is fixed. 
The blocks $H_{AA}$ and $H_{BB}$ are $m\times m$ Wigner-type
matrices with centred entries of variance $1/N$. Since
$m = N/2$, and by the Wigner semicircle
law~\cite{ErdosKnowlesYauYin2013} their empirical spectral
measures converge in the thermodynamic limit to the semicircle
supported on $[-\sqrt{2},\sqrt{2}]$. 
The matrix $W = H_{AB}H_{AB}^\top$ is an $m\times m$ positive
semidefinite square Wishart-type matrix whose empirical spectral measure converges
to the Marchenko--Pastur law with unit ratio supported on
$[0,2]$~\cite{marchenko_1967, bai2010} (see Appendix~\ref{app:B}). {The contour of the integral  $\gamma$ in \eqref{eq:CA_contour} is the boundary of the rectangle
  $(z = x+iy : x_L \le x \le x_R,\ -\eta \le y \le \eta )$,
with $x_L < -2$ chosen to the left of the entire spectrum of
$H_{N,p}$, $x_R = 0^-$, and $\eta>0$  is a fixed imaginary offset keeping
$\gamma$ off the real axis.
The contour is traversed counter-clockwise,  
so that every eigenvalue $\varepsilon_k<0$ of $H_{N,p}$ lies in the interior of $\gamma$ (see Fig.~\ref{fig:contour}).}

{\color{black} We approximate the Green function using the asymptotic freeness between the blocks $H_{AA}$ or $H_{BB}$ and $H_{AB}$ of the matrix in \eqref{eq:block}.
By the local semicircle law for $H_{BB}$~\cite{ErdosKnowlesYauYin2013},
the Green function of $H_{BB}$  in the thermodynamic limit tends to
\begin{eqnarray}
 \lim_{N \rightarrow \infty}&&(zI_m - H_{BB})^{-1} = g(z)I_m, \text{~~where~~~}  g(z) = z - \sqrt{z^2 - 2}
  \label{eq:gz}
\end{eqnarray}
is the Stieltjes transform of the semicircle on
$[-\sqrt{2},\sqrt{2}]$ (see Appendix~\ref{app:B}).
Since $H_{AB}$ is independent of $H_{BB}$, the exact
self-energy~\eqref{eq:Sigma} concentrates to
\begin{equation}
 \lim_{N\to\infty} \Sigma(z) = g(z)W.
  \label{eq:Sigma_largeN}
\end{equation}
The asymptotic freeness assumes that the subsystem $A$ and its environment $B$ are represented by free random matrices in the thermodynamic limit, implying the three blocks $H_{AA}$, $H_{BB}$, and $H_{AB}$ of the Hamiltonian 
$H_{N,p}$ correspond to disjoint edge sets of the Erd\H{o}s--R\'enyi 
graph.  
Therefore, the self-energy $\Sigma(z) = H_{AB}(zI_m - H_{BB})^{-1}H_{AB}^\top$ 
involves the product of $H_{AB}$ and a function of the independent matrix 
$H_{BB}$, and its thermodynamic limit is computed using asymptotic freeness 
between independent Wigner-type random 
matrices~\cite{Anderson2010, Mingo2017}. 
However, asymptotic freeness captures only the leading-order
behaviour of $\Sigma(z)$ in the thermodynamic limit, with corrections 
encoded in $\Delta(z) = \Sigma(z) - g(z)W$.

The magnitude of this correction $\Delta(z)$ is governed by the boundary-to-bulk  ratio of the graph. In an Erd\H{o}s--R\'enyi graph with balanced 
bipartition, $|A|=|B|=N/2$, each of the $(N/2)^2$ possible boundary edges 
between $A$ and $B$ is present independently with probability $p$. It yields 
an expected number of boundary edges $|E_{AB}| \sim pN^2/4$. Similarly, 
each of the $C(N/2,2)$ possible bulk edges within $B$ is present 
independently with probability $p$, giving $|E_{BB}| \sim pN^2/8$. The 
boundary-to-bulk ratio is
  $  |E_{AB}|/|E_{BB}| = 2$,
which is $\mathcal{O}(1)$ and therefore independent of both $N$ and $p$. Unlike geometrically 
local models in $d$ spatial dimensions, where the boundary-to-bulk ratio 
vanishes as $\mathcal{O}(N^{-1/d})$~\cite{Eisert2010}, the Erd\H{o}s--R\'enyi 
random graph has no geometric locality, and the boundary remains 
extensive even in the thermodynamic limit. Consequently, the correction term 
$\Delta(z)$ remains $\mathcal{O}(1)$ in the thermodynamic limit and contributes a finite 
correction to the entropy computed from the approximation of asymptotic freeness.}

After assuming this approximation of asymptotic freeness , the Green function~\eqref{eq:GAA} therefore simplifies to
\begin{equation}
\bar{G}_{AA}(z) =  \lim_{N\to\infty} G_{AA}(z) = \bigl(zI_m - H_{AA} - g(z) W\bigr)^{-1}.
  \label{eq:GAA_limit}
\end{equation}
Equation~\eqref{eq:GAA_limit} is therefore the central formula of the
dense regime. The expression is exact within the asymptotically free effective theory, in the thermodynamic limit, as it involves
only the $m\times m$ blocks $H_{AA}$ and $W$, and it requires no
diagonalization of the full Hamiltonian and the limiting Green function $\bar{G}_{AA}(z)$ is universal.
The semicircle law governing $H_{AA}$ and the Marchenko--Pastur
law governing $W$, as well as the scalar function $g(z)$, all
depend only on the entry variance $N^{-1}$, and are independent of the edge probability
$p$ and the microscopic distribution of the matrix
entries.
Consequently, in the thermodynamic limit, $\bar{G}_{AA}(z)$
depends only on these two universal spectral laws. 

It remains to 
verify that $s_\infty > 0$. If 
$s_\infty = 0$, then by the half-filling 
constraint $\int\lambda\,d\bar\rho_{C_A} 
= 1/2$, the measure $\bar\rho_{C_A}$ 
must equal $\frac{1}{2}\delta(\lambda) 
+ \frac{1}{2}\delta(\lambda-1)$, 
requiring $C_A$ to be a projector and 
the subspace $A$ to be invariant under 
$P_F$. However, by the quantum ergodicity 
of Wigner 
matrices~\cite{ErdosKnowlesYauYin2013}, 
the eigenvectors of $H_{N,p}$ are 
delocalized with $|\psi_k(i)|^2\sim 
1/N$ for all $i$, so no eigenvector 
is supported entirely on $A$ or $B$. 
Therefore $C_A$ has generic nonzero 
off-diagonal elements and $s_\infty > 0$.

The Erd\H{o}s--R\'enyi graph satisfies 
all conditions of 
Theorem~\ref{thm:criterion}. The 
self-energy self-averages to the 
deterministic limit $g(z)W$ under the 
asymptotic freeness approximation, 
as established above. It remains to 
verify that $s_\infty > 0$. If 
$s_\infty = 0$, then by the half-filling 
constraint $\int\lambda\,d\bar\rho_{C_A} 
= 1/2$, the measure $\bar\rho_{C_A}$ 
must equal $\frac{1}{2}\delta(\lambda) 
+ \frac{1}{2}\delta(\lambda-1)$, 
requiring $C_A$ to be a projector and 
the subspace $A$ to be invariant under 
$P_F$. However, by the quantum ergodicity 
of Wigner 
matrices~\cite{ErdosKnowlesYauYin2013}, 
the eigenvectors of $H_{N,p}$ are 
delocalized with $|\phi_k(i)|^2\sim 
1/N$ for all $i$, so no eigenvector 
is supported entirely on $A$ or $B$~\cite{ErdosKnowlesYauYin2013,erdos_2011_bulk,erdos_cmp_2009}. 
Therefore $C_A$ has generic nonzero 
off-diagonal elements and $s_\infty > 0$.

The Erd\H{o}s--R\'enyi graph therefore 
satisfies all conditions of 
Theorem~\ref{thm:criterion}. The 
self-energy self-averages to the 
deterministic limit $g(z)W$ under the 
asymptotic freeness approximation, 
and $s_\infty > 0$ follows from the 
quantum ergodicity of Wigner matrices 
as established in 
Section~\ref{sec:general}. The volume 
law $S_A = ms_\infty$ therefore 
holds with a universal coefficient 
$s_\infty$ independent of $p$ and the 
microscopic graph structure.

{\color{black} The maximal entropy density
would require all eigenvalues of \(C_A\) to converge to \(1/2\). Since every matrix element of \(C_A\) is a coherent sum over the occupied
eigenvectors of the random graph Hamiltonian,  the exact saturation of the
maximal entropy density would therefore require all these sums to cancel asymptotically for every pair \(i\neq j\).
The vanishing of all off-diagonal elements of $C_A$ 
simultaneously is
impossible in the Erd\H{o}s--R\'enyi ensemble, for the 
following reasons. (a) The ensemble has no geometric symmetry that could enforce systematic 
cancellation among the coherent sums 
$ (C_A)_{ij} = \sum_{\varepsilon_k<0}\phi_k(i)\phi_k(j)$. This implies that all 
$C(N/2,2)\sim N^2/8$ off-diagonal elements vanish 
simultaneously imposing $\mathcal{O}(N^2)$ independent conditions on 
the $N/2$ eigenvectors, which is an overdetermined system 
that generically has no solution. (b) By the 
quantum ergodicity of the ER random graph, the eigenvectors 
$\{\phi_k\}$ are delocalized, with components satisfying 
$|\phi_k(i)|^2 \sim 1/N$ for all $i$. 
Each term $\phi_k(i)\phi_k(j) \sim 1/N$ is therefore 
nonzero, and the coherent sum over $N/2$ occupied modes 
gives $(C_A)_{ij} = \mathcal{O}(1)$, which does not vanish in the 
thermodynamic limit. }
Therefore the inequality~\eqref{eq:Jensen} is strict, and
the \emph{entropy deficit}
\begin{equation}
  \Delta s = \log 2 - s_\infty
  = \int_0^1 \bigl[\log 2 - h(\lambda)\bigr]
    d\bar{\rho}_{C_A}(\lambda) > 0
  \label{eq:deficit}
\end{equation}
is strictly positive.
The quantity $\Delta s$ measures the spread of $\bar{\rho}_{C_A}$
around $1/2$. It vanishes only in the maximally entangled limit
$\bar{\rho}_{C_A} = \delta(\lambda - \tfrac{1}{2})$, corresponding
to the Page value $s = \log 2$, and it increases as the spectral
measure broadens away from $1/2$. Therefore, the ground state  does not achieve the Page value, and the deficit is a universal constant.

Although analytical theory establishes the existence of an extensive entanglement entropy, obtaining the universal coefficient $s_{\infty}$ analytically requires determining the spectral measure of the restricted correlation matrix $C_A$. Although the asymptotic Green function can be obtained implicitly through the Dyson equation, extracting the spectral measure of $C_A$ entails solving a nonlinear algebraic equation for its Stieltjes transform, for which no closed-form solution is currently known. So the coefficient $s_{\infty}$ should be determined numerically.

\begin{figure*}[t]
\begin{center}
\includegraphics[width=0.980\textwidth]{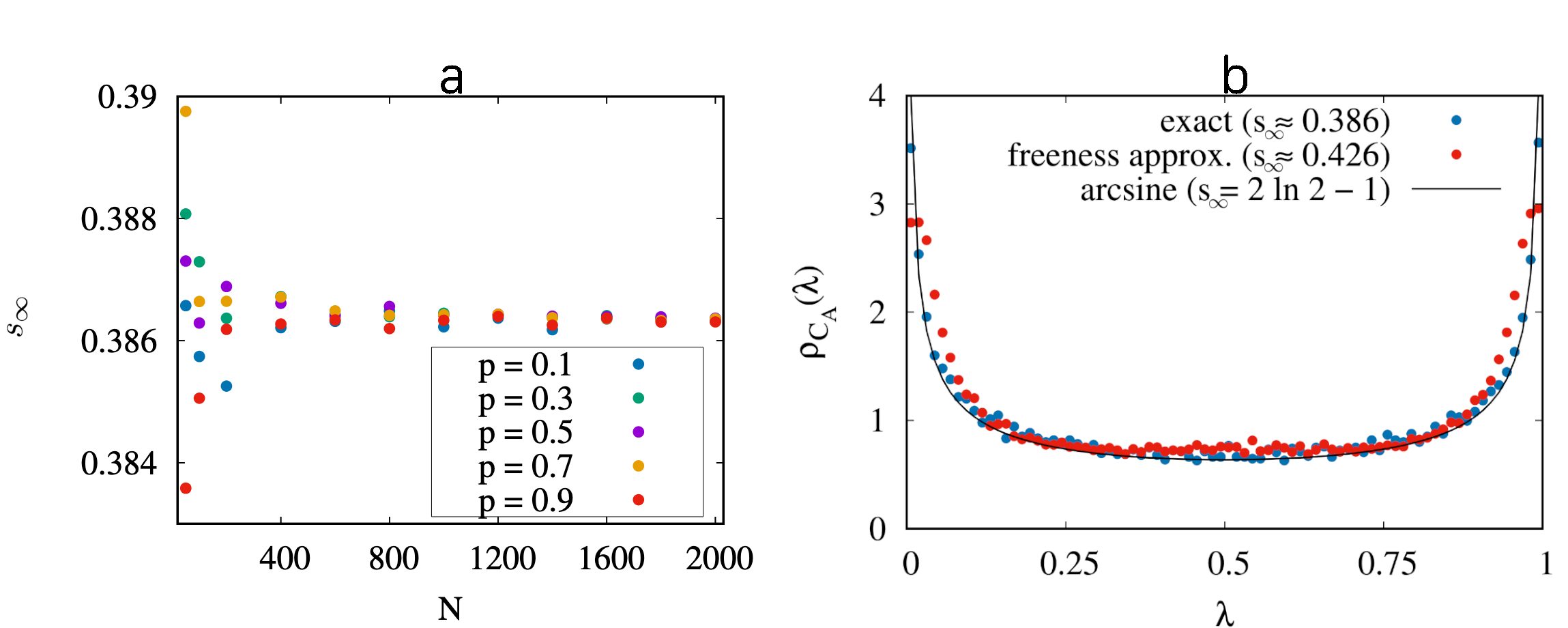}
\end{center}
\caption{(a) Entropy density $s_\infty$ as a function of system size $N$ 
for different values of the edge probability, computed via the 
exact diagonalization of \eqref{eq:second_quantized}. Each data 
point represents the average over $100$ independent 
Erd\H{o}s--R\'enyi random graph realisations. The 
entropy density is independent of $p$ for all system 
sizes considered, consistent with the 
volume law. (b) The spectral density $\bar{\rho}_{C_A}(\lambda)$ of the
entanglement correlation matrix $C_A$,  for subsystem
size $N = 500$, computed via the contour integral
formula~\eqref{eq:CA_contour} and from exact diagonalization of \eqref{eq:second_quantized}.
Both the distributions have  U-shaped profiles, with enhanced weight near
 the endpoints. The von Neumann entanglement entropy obtained from the  asymptotic freeness approximation yields $0.426$ nats, from exact diagonalization yields $0.3863$ nats, from the arcsine distribution yields $2 \ln 2 - 1 \approx 0.386$ nats. }
\label{fig:1}
\end{figure*}

\section{The Universal Coefficient }
\label{sec:numerics}

Figure~\ref{fig:1}(a) shows the entropy density 
 as a function 
of system size $N$ for different values of the edge 
probability. 
The entropy density is independent of $p$ for 
all system sizes considered, confirming the 
analytical result. The 
entropy density converges to a universal constant 
$s_\infty \approx 0.386$ nats as 
$N\to\infty$, establishing 
the exact volume law numerically. The finite-$N$ 
corrections are visible for small system sizes 
but become negligible as $N$ increases.

Figure~\ref{fig:1}(b) shows the empirical 
spectral density $\rho_{C_A}(\lambda)$ of the 
entanglement correlation matrix $C_A$ for 
$N = 500$ and $p = 0.5$, computed from the 
exact many-body ground state of  
and from the asymptotic freeness approximation 
via the contour integral~\eqref{eq:CA_contour}. 
Both distributions exhibit a U-shaped 
profile. This U-shape reflects the 
spread of eigenvalues of $C_A$ away from the 
uniform value $1/2$, which is responsible for 
the positive entropy deficit. The freeness approximation 
places slightly less weight near the endpoints than the 
exact distribution. Hence we see a difference in the entropy density 
$s_\infty^{\text{free}} \approx 0.426 > 
s_\infty \approx 0.386$ is observed.

The discrepancy between the two distributions 
in Fig.~\ref{fig:1}(b) has a physical 
origin. The asymptotic freeness approximation 
replaces the exact environment response 
$(zI_m - H_{BB})^{-1}$ by its isotropic scalar 
limit $g(z)I_m$. This approach preserves the average spectral 
density of subsystem $B$ via the Stieltjes 
transform  but discards all 
information encoded in the eigenvectors of 
$H_{BB}$ and all higher-order correlations 
between the environment modes. From the 
perspective of subsystem $A$, the environment 
is replaced by an isotropic effective medium 
that retains only the coarse-grain spectral information 
of $H_{BB}$. 
This structure breaks the non-homogeneous mixing of the
subsystem $A$ to different modes of $B$,  allowing the concentration of  the eigenvalues of $C_A$ toward $1/2$ and  therefore reducing the entropy below the 
freeness approximation value.


It is interesting to note that the numerical value $s_\infty \approx 0.386$ 
agrees, with  $
  2\ln 2 - 1 = 0.38629\ldots$, which is also the exact value of the entropy 
integral against the \emph{arcsine distribution} defined 
on $[0,1]$
\begin{equation}
  \frac{1}{\pi}\int_0^1
  \frac{h(\lambda)}{\sqrt{\lambda(1-\lambda)}}
  d\lambda = 2\ln 2 - 1.
  \label{eq:arcsine_integral}
\end{equation}
 This coincidence, together with the 
structure of $C_A$ as a compression of the Fermi 
projector, suggests a mechanism, which we 
state as a conjecture (see Fig.~\ref{fig:1}b).

The correlation matrix is exactly $C_A = P_AP_FP_A|_A$, 
a compression of the Fermi projector 
$P_F=\mathbf{1}[H_{N,p}<0]$ by the deterministic 
subsystem projector $P_A$, both of trace $1/2$. 
Asymptotic freeness of Wigner-type matrices from 
deterministic diagonal matrices, such as $P_A$, is 
classical~\cite{dykema1992, Anderson2010}, and is expected to extend from 
the Hamiltonian $H_{N,p}$ itself to its spectral 
projector $P_F$, given the delocalization and spectral 
rigidity of $H_{N,p}$ near the Fermi 
level~\cite{ErdosKnowlesYauYin2013}. 
{\color{black} If $P_A$ and $P_F$ are asymptotically free, then the matrices
$ X_A = 2P_A-I$, and $
  X_F = 2P_F-{I}$,
satisfy $X_A^2=X_F^2={I}$ and therefore each has the
symmetric Bernoulli spectral measure
\begin{equation}
  \mu=\frac12(\delta_{-1}+\delta_{1}).
\end{equation}
Since freeness is preserved under the transformations,
$X_A$ and $X_F$ are themselves free. Voiculescu's 
result that the free additive convolution of two symmetric
Bernoulli measures is the arcsine law~\cite{Voiculescu1991},
\begin{equation}
  \mu_B\boxplus\mu_B
  =
  \frac{dx}{\pi\sqrt{4-x^2}},
  \qquad x\in[-2,2],
\end{equation}
is equivalent to the free Jacobi law for two free projections of
trace $1/2$~\cite{nica2006}. Consequently,
the spectral measure of the compressed correlation matrix
$C_A=P_AP_FP_A|_A$ is
\begin{equation}
  \bar\rho_{C_A}(\lambda)
  =
  \frac{1}{\pi\sqrt{\lambda(1-\lambda)}},
  \qquad
  \lambda\in[0,1],
  \label{eq:arcsine_conjecture}
\end{equation}
with no atoms at the endpoints.}

\begin{conjecture}
\label{conj:arcsine}
In the thermodynamic limit, the empirical spectral 
measure of the entanglement correlation matrix $C_A$ 
for the Erd\H{o}s--R\'enyi graph at half-filling 
converges to the arcsine 
distribution~\eqref{eq:arcsine_conjecture}, and the 
entropy density equals $s_\infty = 2\ln 2 - 1$, 
independent of $p$.
\end{conjecture}

This mechanism also connects our ground-state result 
to the average eigenstate entanglement entropy of 
random quadratic Hamiltonians at subsystem fraction 
$1/2$, where the same 
constant $2\ln 2-1$ has been obtained by 
entirely different methods~\cite{lydzba_2020}. 
The agreement suggests that the ground state of 
$H_{N,p}$ is, in this precise sense, a typical 
eigenstate of a quantum-chaotic quadratic Hamiltonian. Conjecture~\ref{conj:arcsine} makes 
predictions beyond the entropy itself. 
However,  technical gaps remain before 
Conjecture~\ref{conj:arcsine} can be upgraded to a 
theorem.  Asymptotic freeness must be extended 
from polynomials in $H_{N,p}$ to its spectral 
projector $P_F$, which is discontinuous.Although local semicircle laws and eigenvalue rigidity
provide the natural analytical framework, we are not
aware of existing results that directly establish these
properties. A rigorous proof  of the conjecture may require techniques beyond those currently
available, and we therefore leave them for future work.

{\color{black}\section{Discussion and Conclusions}
\label{sec:conclusions}

The contour integral representation of the  correlation matrix and
the Schur complement identity, provides a 
general framework for understanding  the scaling behavior of the 
ground-state entanglement entropy. The three exact 
equations~\eqref{eq:CA_contour}--\eqref{eq:Sigma} 
hold for any free fermion Hamiltonian with a balanced 
bipartition, independently of the graph 
structure. The 
entanglement entropy satisfies an exact volume 
law 
if two conditions hold in the thermodynamic 
limit: (a) the empirical spectral measure 
$\rho_{C_A}$ converges  to 
a deterministic limit $\bar{\rho}_{C_A}$, in the thermodynamic limit as a consequence of self energy self averaging to a deterministic value and the spectrum of the Hamiltonian being bounded; and (b)
the limiting entropy density $s_\infty$ 
strictly positive. The first condition holds 
automatically whenever the empirical spectral 
measures of the diagonal blocks $H_{AA}$ and 
$H_{BB}$ converge to deterministic limits and 
the off-diagonal block $H_{AB}$ has bounded 
operator norm. The second condition requires the subsystem 
$A$ not to be invariant under the 
Fermi projector, which holds 
whenever the eigenstates of the 
Hamiltonian are sufficiently 
delocalized across the bipartition. Random graph ensembles fall into two 
classes: Class I, where both conditions 
hold and a volume law emerges, and 
Class II, where one or both conditions 
fail. Geometrically local graphs, 
where the boundary-to-bulk ratio 
vanishes, and scale-free graphs with 
diverging $\|H_{AB}^{(N)}\|$, belong 
to Class II.

For the Erd\H{o}s--R\'enyi graph, both conditions are satisfied 
rigorously. The first condition follows from 
the Wigner semicircle law for $H_{AA}$ and 
$H_{BB}$, the independence of $H_{AB}$ from 
$H_{BB}$ by the graph construction, and the 
bounded operator norm $\|H_{AB}\|= \mathcal{O}(1)$. The second 
condition follows from the quantum ergodicity 
of Wigner matrices~\cite{ErdosKnowlesYauYin2013},  so 
no eigenvector is supported entirely on $A$ or 
$B$. For graph 
ensembles where one or both conditions fail,
such as geometrically local graphs where the 
boundary-to-bulk ratio vanishes or scale-free graphs
where $\|H_{AB}\|\to\infty$ and the contour 
integral may break down, the volume law does 
not hold. A different scaling of 
entanglement entropy is expected in these cases.}

We have studied the ground-state entanglement entropy of free
fermions on Erd\H{o}s--R\'enyi random graphs in the dense regime, establishing that
the entropy obeys a volume law in the thermodynamic limit. We showed that the
entropy density is independent
of the edge probability and of the microscopic distribution
of the matrix entries.
The entropy density
interpreted as the entanglement per fermion, an intensive
thermodynamic quantity, 
that characterises how much quantum correlation each fermion
carries when embedded in a  scrambled non-local
environment~\cite{Page1993,Vidmar2017}. The strictly positive entropy deficit from the maximum value $\log 2$ achievable by a single fermionic
mode is itself a universal
constant, providing
a sharp characterisation of the departure of the
free-fermion ground state from maximal entanglement. The numerical value $s_\infty$ agrees to four decimal places 
with $2\ln 2 - 1$, 
leading to 
Conjecture~\ref{conj:arcsine}.

 A natural question is whether this entropy density depends only on the divergence of the average degree, rather than on the specific scaling  of  the edge probability. This raises the possibility that graph ensembles with $Np\rightarrow \infty$ but $p \rightarrow 0$, such as $p \sim 1/\log~N$, belong to the same entanglement universality class. Investigating this regime and identifying possible crossover behaviour remain important directions for future work.

Another central result of our analysis is that the bipartite entanglement entropy of free fermions on random graphs is not completely determined by the asymptotically free description of the block Hamiltonians. While the asymptotic-freeness approximation accurately captures the average spectral properties of the environment and reproduces the dominant contribution to the entropy, a \emph{finite but small} discrepancy between the exact and free-theory entropies persists in the thermodynamic limit. This demonstrates that entanglement entropy retains sensitivity to structural information beyond the asymptotically free spectral data. We find therefore entanglement entropy as a non-free-probabilistic observable, even in the thermodynamic limit, for random graphs.

Beyond its mathematical simplicity, the Erdős–Rényi ensemble serves as an idealized model of generic connectivity in disordered quantum systems. Similar random-network descriptions arise naturally in disordered quantum-dot arrays~\cite{cuadra_entropy_2021} and  probabilistic quantum communication networks~\cite{perseguers2010,kimble2008,Biroli2001}. Although the connectivity in real systems generally does not follow the Binomial distribution, our approach provides a technique to solve models of asymptotically asymmetric connectivity. Therefore the entropy density constitutes a theoretical prediction for the entanglement per fermion in any system within a universality class, against which experiments and numerical simulations of more complex models can be compared. In particular, the role of interactions, absent in the present
free-fermion baseline, in modifying the entropy remains to be studied in strongly correlated systems
in random graphs.

\bigskip

\bibliography{manuscript_p.bib}

\onecolumn

\clearpage
\setcounter{equation}{0}
\setcounter{lemma}{0}
\setcounter{theorem}{0}
\renewcommand{\theequation}{A\arabic{equation}}

\setcounter{section}{0}

\section*{APPENDIX A}\label{app:A}

\subsection*{Schur Complement Lemma}

\begin{lemma}
\label{lem:schur}
For all $z \in \mathbb{C}\setminus\mathbb{R}$, the $(A,A)$ block of the Green function
$G(z) = (zI_N -h_N)^{-1}$ is given by~\cite{zhang2005}
\begin{equation}
  \label{eq:schur}
  G_{AA}(z)
  =
  \left(zI_A -H_{AA}   - \Sigma_A(z)\right)^{-1},
\end{equation}
where the \emph{self-energy} matrix is
\begin{equation}
  \label{eq:selfenergy}
  \Sigma_A(z)
  =
  H_{AB}(zI_B - H_{BB})^{-1}\,H_{BA}.
\end{equation}
\end{lemma}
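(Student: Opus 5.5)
The plan is to read off the Schur complement formula from the $2\times 2$ block inversion of $zI_N - h_N$ with respect to the bipartition $V = A\sqcup B$.

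\textbf{Step 1 (invertibility).} Fix $z\in\mathbb{C}\setminus\mathbb{R}$ and write
\begin{equation*}
zI_N - h_N = \begin{pmatrix} zI_A - H_{AA} & -H_{AB} \\ -H_{BA} & zI_B - H_{BB}\end{pmatrix},
\end{equation*}
with $H_{BA} = H_{AB}^\dagger$ ($=H_{AB}^\top$ in the real symmetric case). Since $h_N$ is Hermitian, its spectrum is real, so $zI_N - h_N$ is invertible. The compression $H_{BB}$ is also Hermitian, so $D := zI_B - H_{BB}$ is invertible too, with $\|D^{-1}\| \le 1/|\mathrm{Im}\,z|$.

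\textbf{Step 2 (Schur complement).} Form the Schur complement of $D$,
\begin{equation*}
S(z) = zI_A - H_{AA} - H_{AB}D^{-1}H_{BA} = zI_A - H_{AA} - \Sigma_A(z).
\end{equation*}
Use the block $LDU$ factorisation
\begin{equation*}
zI_N - h_N = \begin{pmatrix} I & -H_{AB}D^{-1} \\ 0 & I\end{pmatrix}
\begin{pmatrix} S & 0 \\ 0 & D\end{pmatrix}
\begin{pmatrix} I & 0 \\ -D^{-1}H_{BA} & I\end{pmatrix}.
\end{equation*}
Each factor can be checked by direct multiplication. The two triangular factors are unipotent, hence invertible. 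Since the left-hand side is invertible, $S(z)$ must be invertible as well.

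\textbf{Step 3 (read off the block).} Inverting the three factors gives
\begin{equation*}
(zI_N-h_N)^{-1} = \begin{pmatrix} I & 0 \\ D^{-1}H_{BA} & I\end{pmatrix}
\begin{pmatrix} S^{-1} & 0 \\ 0 & D^{-1}\end{pmatrix}
\begin{pmatrix} I & H_{AB}D^{-1} \\ 0 & I\end{pmatrix}.
\end{equation*}
The $(A,A)$ entry of this product is $S(z)^{-1}$, which is exactly \eqref{eq:schur}.

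\textbf{Main obstacle.} The only step needing care is the invertibility of $S(z)$. I would get it from the factorisation as above. Alternatively, one can check that $\mathrm{Im}\,S(z) = \mathrm{Im}(z)\,\bigl(I + H_{AB}|D^{-1}|^2 H_{BA}\bigr)$ in the sense of the anti-Hermitian part. This is definite for $z\notin\mathbb{R}$, which also gives the bound $\|G_{AA}(z)\|\le 1/|\mathrm{Im}\,z|$ used later in Lemma~\ref{lem:boundary_response}.
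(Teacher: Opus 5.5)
Your proof is correct and is essentially the paper's argument: factor $zI_N - h_N$ by block Gaussian elimination with respect to $D = zI_B - H_{BB}$, deduce that $S(z)$ is invertible, and read off the $(A,A)$ block of the inverse as $S(z)^{-1}$. Your three-factor $LDU$ version, with $-H_{AB}D^{-1}$ in the upper-right block, is the correctly signed form of the paper's two-factor factorization, which displays $+H_{AB}(zI_B-H_{BB})^{-1}$ there. Deriving invertibility of $S(z)$ from the invertible left-hand side also supplies the justification missing from the paper's ``since both factors are invertible''.
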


\begin{proof}
Since $H_{BB}$ is real symmetric and $\Im z < 0$, the matrix
$zI_B - H_{BB} $ is invertible. Applying block Gaussian elimination
to $(zI_N - h_N)$ we find
\begin{eqnarray}
 \begin{pmatrix}
   - H_{AA} + zI_A & -H_{AB} \\
    -H_{BA}        & -H_{BB} + zI_B
  \end{pmatrix}
  =
\begin{pmatrix}
    I_A & H_{AB}(-H_{BB} +  zI_B)^{-1} \\
    0   & I_B
  \end{pmatrix}
&&  \begin{pmatrix}
    S_A(z) & 0 \\
    -H_{BA} & -H_{BB} + zI_B
  \end{pmatrix}, \nonumber\\
\end{eqnarray}
where
\begin{equation}
  S_A(z)
  =
   zI_A - H_{AA} 
  -
  H_{AB}(zI_B - H_{BB})^{-1}H_{BA}
\end{equation}
 is the Schur complement of $(zI_B-H_{BB})$ in $(zI_N-h_N)$.
Since both factors are invertible, $S_A(z)$ is invertible.
Taking inverses and reading off the $(A,A)$ block yields
\[
G_{AA}(z)=S_A(z)^{-1},
\]
which is \eqref{eq:schur}.
\end{proof}
The Schur complement 
identity~\eqref{eq:schur} has an 
important consequence for the analytic 
structure of $G_{AA}(z)$. The poles 
of $G_{AA}(z)$ are the values of $z$ 
where $S_A(z)$ fails to be invertible, 
i.e. where $\det(S_A(z)) = 0$. Taking 
determinants of both sides of the 
block Gaussian elimination gives the 
identity
\begin{equation}
  \det(zI_N - H_N) 
  = \det(S_A(z))\cdot
    \det(zI_B - H_{BB}).
  \label{eq:block_det}
\end{equation}
Since $H_{BB}$ is real symmetric, all 
its eigenvalues are real, and therefore 
$\det(zI_B - H_{BB}) \neq 0$ for all 
$z\in\mathbb{C}\setminus\mathbb{R}$. 
Consequently
\begin{equation}
  \det(S_A(z)) = 0 
  \iff 
  \det(zI_N - H_N) = 0 
  \iff 
  z\in\sigma(H_N),
  \label{eq:poles}
\end{equation}
and since $H_N$ is real symmetric. 
Therefore all poles of $G_{AA}(z)$ 
are real, and any contour $\gamma$ 
with $|\mathrm{Im}(z)| = \eta > 0$ 
is a valid integration contour for 
the representation~\eqref{eq:CA_contour} 
of $C_A$.

\subsection*{Proof of Lemma 2}
\begin{proof}
(a) Write $z = x + i\eta$ with $\eta > 0$. 
Since $H_{AB}$ is real, the imaginary 
part passes through it
\begin{equation}
  \mathrm{Im}\,\Sigma^{(N)}(z) 
  = \mathrm{Im}\bigl(H_{AB}(zI_m - H_{BB})^{-1}
    H_{AB}^\top\bigr) 
  = H_{AB}\,
    \mathrm{Im}\bigl((zI_m - H_{BB})^{-1}\bigr)
    H_{AB}^\top.
  \label{eq:sigma_imag}
\end{equation}
Since $H_{BB}$ is Hermitian with real 
eigenvalues $\mu_j$ and orthonormal 
eigenvectors $u_j$, the resolvent 
decomposes as
\begin{equation}
  (zI_m - H_{BB})^{-1} 
  = \sum_j\frac{u_ju_j^\top}
    {(x-\mu_j)+i\eta},
  \label{eq:resolvent_decomp}
\end{equation}
whose imaginary part is
\begin{equation}
  \mathrm{Im}\bigl((zI_m - H_{BB})^{-1}\bigr) 
  = -\eta\sum_j
    \frac{u_ju_j^\top}{(x-\mu_j)^2+\eta^2} 
  \leq 0 
  \quad\text{for } \eta > 0.
  \label{eq:im_resolvent}
\end{equation}
Since congruence by $H_{AB}$ preserves 
negative semidefiniteness, 
$\eqref{eq:sigma_imag}$ 
and~$\eqref{eq:im_resolvent}$ give
\begin{equation}
  \mathrm{Im}\,\Sigma^{(N)}(z) 
  = H_{AB}\,
    \mathrm{Im}\bigl((zI_m - H_{BB})^{-1}\bigr)
    H_{AB}^\top 
  \leq 0.
  \label{eq:sigma_imag_neg}
\end{equation}

Define 
\begin{equation}
  A^{(N)}(z) = zI_m - H_{AA}^{(N)} 
- \Sigma^{(N)}(z),
\label{eq:A}
\end{equation}
so that 
$G_{AA}^{(N)}(z) = (A^{(N)}(z))^{-1}$. 
Since $H_{AA}^{(N)}$ is Hermitian, 
$\mathrm{Im}(H_{AA}^{(N)}) = 0$, and 
by~\eqref{eq:sigma_imag_neg}
\begin{equation}
  \mathrm{Im}\,A^{(N)}(z) 
  = \eta I_m 
  - \mathrm{Im}\,\Sigma^{(N)}(z) 
  \geq \eta I_m.
  \label{eq:imA}
\end{equation}
For any nonzero $v\in\mathbb{C}^m$
\begin{equation}
  \|A^{(N)}(z)v\|\cdot\|v\|
  \geq
  |\langle v, A^{(N)}(z)v\rangle|
  \geq
  \langle v,\mathrm{Im}\,A^{(N)}(z)\,v\rangle
  \geq \eta\|v\|^2,
  \label{eq:Av_lower}
\end{equation}
giving $\|A^{(N)}(z)v\| \geq \eta\|v\|$ 
for all $v$, and therefore:
\begin{equation}
  \|G_{AA}^{(N)}(z)\| 
  = \|(A^{(N)}(z))^{-1}\| 
  \leq \frac{1}{\eta},
  \label{eq:GAA_bound}
\end{equation}
uniformly for all $N$ and all $z$ with 
$|\mathrm{Im}(z)|\geq\eta$. The identical 
argument applied to 
$\bar{A}^{(N)}(z) = zI_m - H_{AA}^{(N)} 
- \bar\Sigma(z)$, using 
$\mathrm{Im}\,\bar\Sigma(z) = 
\lim_{N\to\infty}\mathrm{Im}\,\Sigma^{(N)}(z) 
\leq 0$, gives 
$\|\bar{G}_{AA}^{(N)}(z)\| \leq 1/\eta$.

(b) By \eqref{eq:A}, defining

\begin{equation}
   A^{(N)}(z) = zI_m - H_{AA}^{(N)} 
- \Sigma^{(N)}(z), \text{~~and~~}   \bar{A}^{(N)}(z) = zI_m - H_{AA}^{(N)} 
- \bar{\Sigma}(z),    
\end{equation}

we have

\begin{equation}
  \bar{A}^{(N)}(z) -  A^{(N)}(z) = \bar{\Sigma}(z)
- \Sigma^{(N)}(z).
\end{equation}

Using the identity
\begin{equation}
  (\bar{A}^{(N)}(z))^{-1} -  (A^{(N)}(z))^{-1} 
  = (\bar{A}^{(N)}(z))^{-1} \left(\bar{A}^{(N)}(z) - {A}^{(N)}(z)\right)  (A^{(N)}(z))^{-1},
\end{equation}

we get

\begin{equation}
  (\bar{A}^{(N)}(z))^{-1} -  (A^{(N)}(z))^{-1} 
  = (\bar{A}^{(N)}(z))^{-1} \left(\bar{\Sigma}(z)
- \Sigma^{(N)}(z)\right)  (A^{(N)}(z))^{-1},
\end{equation}

Since $H_{AA}^{(N)}$ appears in both 
$A^{(N)}(z)$ and $\bar{A}^{(N)}(z)$ and 
cancels in the difference 
$\bar{A}^{(N)}(z) - A^{(N)}(z) = 
\Sigma^{(N)}(z) - \bar\Sigma(z)$, the 
resolvent identity gives
\begin{equation}
  G_{AA}^{(N)}(z) - \bar{G}_{AA}^{(N)}(z)
  = G_{AA}^{(N)}(z)
    \bigl(\Sigma^{(N)}(z) - \bar\Sigma(z)\bigr)
    \bar{G}_{AA}^{(N)}(z).
  \label{eq:resolvent_diff}
\end{equation}
Taking operator norms and using 
$\eqref{eq:GAA_bound}$
\begin{equation}
 \lim_{N \to \infty} \sup_{|\mathrm{Im}(z)|\geq\eta}
  \|G_{AA}^{(N)}(z) - \bar{G}_{AA}^{(N)}(z)\|
  \leq \frac{1}{\eta^2}
  \sup_{|\mathrm{Im}(z)|\geq\eta}
  \|\Sigma^{(N)}(z) - \bar\Sigma(z)\|
= 0,
  \label{eq:final_rate}
\end{equation}
by assumption~\eqref{eq:sigma_conv}. 
\end{proof}

\subsection*{Proof of Theorem 1}
\begin{proof}
The proof proceeds through the following 
chain of implications.

\emph{(a) Green function convergence} 
By Lemma~\ref{lem:boundary_response}, 
the assumption $\Sigma^{(N)}(z) \to 
\bar\Sigma(z)$ implies 
$G_{AA}^{(N)}(z) \to \bar{G}_{AA}^{(N)}(z)$ 
uniformly for $|\mathrm{Im}(z)|\geq\eta$, 
with rate
\begin{equation}
  \sup_{|\mathrm{Im}(z)|\geq\eta}
  \|G_{AA}^{(N)}(z) - \bar{G}_{AA}^{(N)}(z)\|
  \leq \frac{1}{\eta^2}
  \sup_{|\mathrm{Im}(z)|\geq\eta}
  \|\Sigma^{(N)}(z) - \bar\Sigma(z)\|
  \to 0.
\end{equation}

\emph{(b) Correlation matrix 
convergence} By assumption (i), all 
poles of $G_{AA}^{(N)}(z)$ lie in the 
bounded real interval $[-M,M]$, so the 
fixed contour $\gamma$ encloses exactly 
the negative eigenvalues for every $N$. 
The contour integral then gives
\begin{equation}
  \|C_A^{(N)} - \bar{C}_A^{(N)}\| 
  \leq \frac{\ell(\gamma)}{2\pi}
  \sup_{z\in\gamma}
  \|G_{AA}^{(N)}(z) - \bar{G}_{AA}^{(N)}(z)\| 
  \to 0,
\end{equation}
where $\ell(\gamma)$ is the length of 
$\gamma$. Both $C_A^{(N)}$ and 
$\bar{C}_A^{(N)}$ are Hermitian, since 
$\gamma$ is symmetric about the real 
axis.

\emph{(c) Eigenvalue convergence.} 
By Weyl's inequality for Hermitian 
matrices
\begin{equation}
  \max_i|\lambda_i^{(N)} 
  - \bar\lambda_i^{(N)}| 
  \leq \|C_A^{(N)} - \bar{C}_A^{(N)}\| 
  \to 0.
\end{equation}

 Eigenvalue convergence implies the 
empirical spectral measures converge 
weakly, $\rho_{C_A^{(N)}} \Rightarrow 
\bar\rho_{C_A}$.

\emph{(d) Entropy convergence.} 
Since $h$ is continuous and bounded 
on $[0,1]$, weak convergence gives
\begin{equation}
  \frac{S_A}{m} 
  = \int_0^1 h(\lambda)\,
    d\rho_{C_A^{(N)}}(\lambda) 
  \xrightarrow{N\to\infty} 
  \int_0^1 h(\lambda)\,
    d\bar\rho_{C_A}(\lambda) 
  = s_\infty,
\end{equation}
which is positive by the assumption 
$s_\infty > 0$.

\emph{(e) Bounds on $s_\infty$} Since 
$C_A$ is a compression of the Fermi 
projector onto $m = N/2$ filled states, 
half-filling gives
\begin{equation}
  \int_0^1\lambda\,d\bar\rho_{C_A}(\lambda) 
  = \frac{1}{m}\operatorname{Tr}C_A 
  = \frac{1}{2}.
\end{equation}
The binary entropy $h(\lambda)$ is 
strictly concave on $[0,1]$ with unique 
maximum $h(1/2) = \log 2$, so by 
Jensen's inequality
\begin{equation}
  s_\infty 
  = \int_0^1 h(\lambda)\,
    d\bar\rho_{C_A}(\lambda) 
  \leq h\!\left(
    \int_0^1\lambda\,
    d\bar\rho_{C_A}(\lambda)
  \right) 
  = \log 2,
  \label{eq:Jensen}
\end{equation}
with equality if and only if 
$\bar\rho_{C_A} = \delta(\lambda-\tfrac{1}{2})$.

{\color{black} The entanglement entropy density 
$s_\infty$ measures how far the 
eigenvalues $\{\lambda_i\}$ of $C_A$ 
are spread away from $0$ and $1$. 
When all eigenvalues are exactly $0$ 
or $1$, $C_A$ is a projector, 
$h(\lambda_i) = 0$ for all $i$, and 
$s_\infty = 0$ . 
Therefore the condition $s_\infty > 0$ 
requires eigenvalues to be spread 
strictly into the interior of $(0,1)$
\begin{equation}
  \frac{1}{m}\operatorname{Tr}(C_A - C_A^2)
  = \frac{1}{m}\sum_{i=1}^m
    \lambda_i(1-\lambda_i) > 0,
\label{eq:idem_defect}
\end{equation}
The idempotency 
defect~\eqref{eq:idem_defect} measures 
the cross-boundary coherence of the 
occupied subspace, the degree to 
which the occupied eigenstates are 
delocalized across the bipartition. If the occupied subspace 
could be decomposed entirely into 
states supported on $A$ or on $B$, 
then $C_A$ would be a projector, the 
idempotency defect would vanish, and therefore
$s_\infty$ would vanish. A volume law therefore 
requires  the 
weaker condition of ergodicity that a finite fraction 
of the occupied subspace retains 
non-vanishing cross-boundary coherence 
in the thermodynamic limit. The 
idempotency defect is therefore the 
minimal criterion for the 
emergence of a positive entanglement 
entropy density.}
\end{proof}

As established from the Schur complement 
identity~\eqref{eq:schur} and the block 
determinant identity~\eqref{eq:block_det}, 
the poles of $G_{AA}^{(N)}(z)$ coincide 
with the real eigenvalues of $H_N$. In addition, the spectral radius of the Hamiltonianis 
bounded, $\|H_N\| \leq M$ from the assumption of Theorem~\ref{thm:criterion}.the  
Therefore any contour $\gamma$ with 
$|\mathrm{Im}(z)| = \eta > 0$ avoids 
all poles of $G_{AA}^{(N)}(z)$, and 
the resolvent bound 
$\|G_{AA}^{(N)}(z)\|\leq 1/\eta$ 
ensures the contour integral 
representation~\eqref{eq:CA_contour} 
is well defined.

\section*{APPENDIX B}\label{app:B}
\subsection*{Spectral Distributions of $H_{AA}$ and $H_{BB}$}

The blocks $H_{AA}$ and $H_{BB}$ are each $(N/2)\times(N/2)$
Wigner-type random matrices with centred independent entries
of variance $1/N$.
Since their size is $M=N/2$ but their entry variance is
$1/N = 2/M\cdot(1/2)$, the effective variance parameter in
units of the block size is $\sigma_*^2 = 1/2$.
By the Wigner semicircle law~\cite{ErdosKnowlesYauYin2013},
the empirical spectral distributions of $H_{AA}$ and $H_{BB}$
each converge to the semicircle law
\begin{equation}
  \mu(x) = \frac{1}{\pi}\sqrt{2-x^2},
  \qquad x\in[-\sqrt2,\sqrt2].
  \label{eq:semicircle}
\end{equation}
The corresponding Stieltjes transform is
\begin{equation}
  m_{AA}(z) = m_{BB}(z)
  = \int_{-\sqrt2}^{\sqrt2} \frac{\mu(x)}{z-x} dx,
  \qquad z\in\mathbb{C}\setminus[-\sqrt2,\sqrt2].
  \label{eq:m_def}
\end{equation}
 
The empirical spectral distributions of $H_{AA}$ and $H_{BB}$ converge
to the semicircle law
\begin{equation}
\mu(x)=\frac{1}{\pi}\sqrt{2-x^2},
\qquad x\in[-\sqrt2,\sqrt2].
\end{equation}

Introducing the rescaled variables
$x=\sqrt2\,u$ and $w=z/\sqrt2$, one obtains
\begin{equation}
m(z)
=
\sqrt2\,S(w),
\qquad
S(w)
=
\frac{1}{\pi}
\int_{-1}^{1}
\frac{\sqrt{1-u^2}}{w-u}\,du .
\label{eq:S_def}
\end{equation}
The integral $S(w)$ is the Stieltjes transform of the standard
semicircle distribution on $[-1,1]$. Evaluating it, for example by
setting $u=\cos\theta$ and reducing the resulting expression to the
classical integral
\[
\int_0^\pi \frac{d\theta}{w-\cos\theta}
=
\frac{\pi}{\sqrt{w^2-1}},
\qquad
w\notin[-1,1],
\]
gives
\begin{equation}
S(w)=w-\sqrt{w^2-1},
\end{equation}
where the branch is chosen so that
$S(w)\sim (2w)^{-1}$ as $|w|\to\infty$. Undoing the rescaling yields
\begin{equation}
m_{AA}(z)=m_{BB}(z)
=
z-\sqrt{z^2-2},
\label{eq:mAA_mBB}
\end{equation}
which is the expression used in the main text.

\subsection*{Spectral Distribution of $W = H_{AB}H_{AB}^{\top}$}

We define the $(N/2)\times(N/2)$ matrix
\begin{equation}
W = H_{AB}H_{AB}^{\top}.
\end{equation}
Since the entries of $H_{AB}$ have variance $1/N$, we may write
\begin{equation}
H_{AB}=N^{-1/2}\widetilde X,
\end{equation}
where $\widetilde X$ has independent centred entries of unit variance.
It follows that
\begin{equation}
W
=
\frac{1}{N}\widetilde X\widetilde X^{\top}
=
\frac12 Y,
\qquad
Y
=
\frac{\widetilde X\widetilde X^{\top}}{N/2}.
\end{equation}
The matrix $Y$ is a sample covariance matrix with aspect ratio
$c=1$. By the Marchenko--Pastur theorem~\cite{marchenko_1967},
its empirical spectral distribution converges to
\begin{equation}
\rho_Y(\lambda)
=
\frac{1}{2\pi}
\sqrt{\frac{4-\lambda}{\lambda}},
\qquad
\lambda\in(0,4].
\end{equation}

Since $W=\frac12Y$, the spectral density rescales according to
\begin{equation}
\mu_W(x)
=
2\,\rho_Y(2x)
=
\frac{1}{\pi}
\sqrt{\frac{2-x}{x}},
\qquad
x\in(0,2].
\label{eq:MP_density}
\end{equation}

The corresponding Stieltjes transform is
\begin{equation}
m_W(z)
=
\int_0^2
\frac{\mu_W(x)}{z-x}\,dx
=
\frac1\pi
\int_0^2
\frac{1}{z-x}
\sqrt{\frac{2-x}{x}}\,dx .
\label{eq:mW_def}
\end{equation}
Introducing the substitution $x=1-\cos\phi$ reduces the integral to
\begin{equation}
m_W(z)
=
\frac1\pi
\int_0^\pi
\frac{1+\cos\phi}{z-1+\cos\phi}\,d\phi,
\end{equation}
which may be evaluated using the standard Poisson-kernel integral. The result is
\begin{equation}
m_W(z)
=
1-\sqrt{\frac{z-2}{z}},
\label{eq:mW}
\end{equation}
where the branch of the square root is chosen such that
$m_W(z)\sim z^{-1}$ as $|z|\to\infty$, as required for a
Stieltjes transform.

\section*{APPENDIX C}
\label{app:p_independence}
Consider the normalized Erd\H{o}s--R\'enyi Hamiltonian
\begin{equation}
H_{N,p}
=
\frac{A_N-p(\mathbf 1\mathbf 1^\top-I_N)}
{\sqrt{Np(1-p)}}.
\label{eq:H_scaling}
\end{equation}

We show that the thermodynamic entanglement entropy density is independent of \(p\in(0,1)\).

The $i$-th component of $t_N\mathbf{1}$ is the 
degree of vertex $i$
\begin{equation}
  (t_N\mathbf{1})_i 
  = \sum_{j\neq i}A_{ij} 
  = \deg(i).
  \label{eq:degree}
\end{equation}
In the Erd\H{o}s--R\'enyi graph $G(N,p)$, the 
degree of each vertex satisfies 
$\deg(i)\sim\mathrm{Binomial}(N-1,p)$, with
\begin{equation}
  \mathbb{E}[\deg(i)] = (N-1)p \approx Np, 
  \qquad 
  \mathrm{Var}[\deg(i)] = (N-1)p(1-p) = O(N).
\end{equation}
By the law of large numbers, $\deg(i)/N \to p$ 
almost surely as $N\to\infty$, so
\begin{equation}
  (t_N\mathbf{1})_i 
  = Np + O(\sqrt{N}) 
  \quad \text{almost surely, for each } i.
  \label{eq:degree_concentration}
\end{equation}
Therefore
\begin{equation}
  t_N\mathbf{1} 
  = Np\cdot\mathbf{1} 
  + \mathcal{O}(\sqrt{N})\cdot\mathbf{1} 
  \quad \text{almost surely,}
  \label{eq:tN1}
\end{equation}
and dividing by $\sqrt{N}$
\begin{equation}
  t_N\frac{\mathbf{1}}{\sqrt{N}} 
  = Np\cdot\frac{\mathbf{1}}{\sqrt{N}} 
  + \mathcal{O}(\sqrt{N})\cdot\frac{\mathbf{1}}{\sqrt{N}}.
  \label{eq:tN_outlier}
\end{equation}
The relative magnitude of the error term is
\begin{equation}
  \frac{\left\|\mathcal{O}(\sqrt{N})\cdot
  \frac{\mathbf{1}}{\sqrt{N}}\right\|}{Np} 
  = \frac{\mathcal{O}(\sqrt{N})\cdot
  \frac{\sqrt{N}}{\sqrt{N}}}{Np} 
  = \frac{\mathcal{O}(\sqrt{N})}{Np} 
  = \mathcal{O}\left(\frac{1}{\sqrt{N}}\right) 
  \to 0
  \label{eq:relative_error}
\end{equation}
as $N\to\infty$. Therefore $\mathbf{1}/\sqrt{N}$ 
is an asymptotically exact eigenvector of $t_N$ 
with eigenvalue
\begin{equation}
  \lambda_{\mathrm{out}}^{(t)} 
  \approx Np \to \infty 
  \quad \text{as } N\to\infty.
  \label{eq:outlier_eigenvalue}
\end{equation}

The bulk spectrum of $t_N$ lies in the interval 
$[-2\sqrt{Np(1-p)}, 2\sqrt{Np(1-p)}]$ by the 
Wigner semicircle law applied to the fluctuation 
matrix $t_N - \mathbb{E}[t_N]$~\cite{ErdosKnowlesYauYin2013}. 
The spectral gap between the outlier eigenvalue 
$\lambda_{\mathrm{out}}^{(t)} \approx Np$ and the 
top of the bulk spectrum $2\sqrt{Np(1-p)}$ is
\begin{equation}
  \lambda_{\mathrm{out}}^{(t)} 
  - 2\sqrt{Np(1-p)} 
  \approx Np - 2\sqrt{Np(1-p)} 
  = \mathcal{O}(N) \to \infty.
  \label{eq:spectral_gap}
\end{equation}
The perturbation 
of each bulk eigenvector $\psi_k$ due to the 
outlier direction $\mathbf{1}/\sqrt{N}$ is bounded 
by the ratio of the perturbation norm to the 
spectral gap
\begin{equation}
  \left|\left\langle\frac{\mathbf{1}}{\sqrt{N}}
  \bigg|\psi_k\right\rangle\right| 
  \leq \frac{\mathcal{O}(\sqrt{N})}{\mathcal{O}(N)} 
  = \mathcal{O}\left(\frac{1}{\sqrt{N}}\right) 
  \to 0
  \label{eq:DavisKahan}
\end{equation}
as $N\to\infty$. Therefore in the thermodynamic 
limit
\begin{equation}
  \left\langle\frac{\mathbf{1}}{\sqrt{N}}
  \bigg|\psi_k\right\rangle = 0, 
  \qquad k = 1,\ldots,N-1.
  \label{eq:orthogonality}
\end{equation}
For any bulk eigenvector $\psi_k$ 
satisfying~\eqref{eq:orthogonality}
\begin{equation}
  p\mathbf{1}\mathbf{1}^\top\psi_k 
  = p\mathbf{1}(\mathbf{1}^\top\psi_k) 
  = p\mathbf{1}\cdot\sqrt{N}
  \left\langle\frac{\mathbf{1}}{\sqrt{N}}
  \bigg|\psi_k\right\rangle 
  = p\mathbf{1}\cdot\sqrt{N}\cdot 0 
  = 0.
  \label{eq:rank1_vanishes}
\end{equation}

The normalized Hamiltonian is
\begin{equation}
  H_{N,p} 
  = \frac{t_N - p(\mathbf{1}\mathbf{1}^\top - I_N)}
  {\sqrt{Np(1-p)}} 
  = \frac{t_N - p\mathbf{1}\mathbf{1}^\top + pI_N}
  {\sqrt{Np(1-p)}}.
  \label{eq:HNp_decomposed}
\end{equation}
For any bulk eigenvector $\psi_k$ of $t_N$ with 
eigenvalue $\lambda_k^{(t)}$, using 
\eqref{eq:rank1_vanishes}
\begin{equation}
  (t_N - p\mathbf{1}\mathbf{1}^\top + pI_N)\psi_k 
  = t_N\psi_k 
  - p\mathbf{1}\mathbf{1}^\top\psi_k 
  + p\psi_k \nonumber\\
  = \lambda_k^{(t)}\psi_k - 0 + p\psi_k 
  \nonumber\\
  = (\lambda_k^{(t)} + p)\psi_k.
  \label{eq:shifted_eig}
\end{equation}
Applying the scalar scaling 
$1/\sqrt{Np(1-p)} > 0$
\begin{equation}
  H_{N,p}\psi_k 
  = \frac{\lambda_k^{(t)} + p}{\sqrt{Np(1-p)}}
  \psi_k 
  \equiv \varepsilon_k^{(H)}\psi_k,
  \label{eq:H_eig}
\end{equation}
where
\begin{equation}
  \varepsilon_k^{(H)} 
  = \frac{\lambda_k^{(t)} + p}{\sqrt{Np(1-p)}}
  = \frac{\lambda_k^{(t)}}{\sqrt{Np(1-p)}} 
  + \frac{p}{\sqrt{Np(1-p)}}.
  \label{eq:eigenvalue_relation}
\end{equation}
The second term vanishes in the thermodynamic 
limit
\begin{equation}
  \frac{p}{\sqrt{Np(1-p)}} 
  = \sqrt{\frac{p}{N(1-p)}} 
  = \mathcal{O}\left(\frac{1}{\sqrt{N}}\right) 
  \to 0 
  \quad \text{as } N\to\infty.
  \label{eq:shift_vanishes}
\end{equation}
Therefore in the thermodynamic limit
\begin{equation}
  \varepsilon_k^{(H)} 
  \xrightarrow{N\to\infty} 
  \frac{\lambda_k^{(t)}}{\sqrt{Np(1-p)}},
  \label{eq:eigenvalue_limit}
\end{equation}
and the bulk eigenvectors of $t_N$ and $H_{N,p}$ 
are identical in the thermodynamic 
limit. Since the bulk eigenvectors of $t_N$ and $H_{N,p}$ 
are identical and the occupied sets are 
identical, the Fermi projectors are equal 
in the thermodynamic limit
\begin{equation}
  P_F^{(t)} 
  = \sum_{\lambda_k^{(t)} < \mu_{\mathrm{orig}}}
  |\psi_k\rangle\langle\psi_k| 
  = \sum_{\varepsilon_k^{(H)} < 0}
  |\psi_k\rangle\langle\psi_k| 
  = P_F^{(H)}.
  \label{eq:projectors_equal}
\end{equation}
The entanglement correlation matrices are 
therefore identical
\begin{equation}
  C_A^{(t)} 
  = P_F^{(t)}\big|_A 
  = P_F^{(H)}\big|_A 
  = C_A^{(H)}.
  \label{eq:CA_equal}
\end{equation}
This confirms that the $p$-independence of 
$s_\infty$ is a fundamental property of the 
Erd\H{o}s--R\'enyi random graph ensemble, 
independent of the normalization convention 
used for the Hamiltonian. 

\end{document}